\newtheorem{assumption}{Assumption}[section]
\newcommand{\HEI}[1]{\bf Hybrid-EI}
\newtheorem{theorem}{Theorem}
\newtheorem{remark}{Remark}
\newtheorem{example}{Example}
\newtheorem{definition}{Definition}
\begin{document}
%

\title{Event-Triggered Control for Nonlinear \\ Time-Delay Systems}

%
%
%

\author{Kexue~Zhang~~~~
        ~Bahman~Gharesifard~~~~
        ~Elena~Braverman
\thanks{This work was supported by the Natural Sciences and Engineering Research Council of Canada (NSERC), the grant RGPIN-2020-03934. The first author was supported by a fellowship from the Pacific Institute of the Mathematical Sciences (PIMS), Canada.}
\thanks{K. Zhang and E. Braverman are with the Department
of Mathematics and Statistics, University of Calgary, Calgary, Alberta T2N 1N4, Canada (e-mails: \{kexue.zhang,~maelena\}@ucalgary.ca).

B. Gharesifard is with the Department
of Mathematics and Statistics, Queen's University, Kingston, Ontario K7L 3N6, Canada (e-mail: bahman.gharesifard@queensu.ca).}
}

\maketitle
%
\begin{abstract}
This paper studies the event-triggered control problem of general nonlinear systems with time delay. A novel event-triggering scheme is presented with two tunable design parameters, based on a Lyapunov functional result for input-to-state stability of time-delay systems. The proposed event-triggered control algorithm guarantees the resulting closed-loop systems to be globally asymptotically stable, uniformly bounded and/or globally attractive for different choices of these parameters. Sufficient conditions on the parameters are derived to exclude Zeno behavior. Two illustrative examples are studied to demonstrate our theoretical results.
\end{abstract}
%
\begin{IEEEkeywords}
Event-triggered control, Zeno behavior, nonlinear system, time delay, Lyapunov functional method
\end{IEEEkeywords}
%
%
%
%
%
%

\section{Introduction}\label{Sec1}
%
%
%
%
%
%
%
%

\IEEEPARstart{E}{vent}-triggered control provides an effective way to update the control signals at a sequence of discrete-time moments determined by certain execution rule, often referred to as an \emph{event}. The main advantage is to improve the efficiency of control implementations while still guaranteeing the desired performance levels of the closed-loop systems. Since the flagship work~\cite{PT:2007} was published more than a decade ago, the control community has shown an increasing interest in the study of event-triggering algorithms and their corresponding applications. Up to now, the applications of event-triggered control have been found in a wide variety of control problems, notably in consensus and synchronization problems, distributed optimization, economic dispatch, robot operation, and vehicle platooning (see, e.g., the survey papers \cite{ZPJ-TFL:2015,CN-EG-JC:2019,QL-ZW-XH-DHZ:2014} and references therein).

Time delay exists widely in nature and frequently occurs in many practical systems (see monographs \cite{EF:2014,MK:2009} and many references therein). Evolution of a time-delay system, typically modeled by functional differential equations (see, e.g., \cite{JKH:1977}), depends not only on the present states but also the states at some historical moments. Time-delay effects have been considered with the event-triggered \emph{controllers} for numerous systems that themselves are delay-free. For example, event-based consensus of multi-agent systems was studied in \cite{WZ-ZPJ:2015,EG-YC-DWC:2017} with time delay effects considered in the consensus protocols. To compensate for the delay in the control inputs, the predictor-based control method recently has been combined with the event-triggered control mechanism for stabilization of control systems (see, e.g., \cite{AS-EF:2016,EN-PT-JC:2020}). However, the study of event-triggered control for time-delay systems has just started to gain attentions in the past few years. The main topic of this research is to provide a framework for incorporating time delays in the control systems with the design of event-triggered controllers, while ensuring the absence of \emph{Zeno behavior}, i.e., making an infinite number of control updates in a finite time. For systems without time delay, one of the main ideas for ruling out Zeno behavior, initially introduced in \cite{PT:2007}, is to analyze the dynamics of ${\|\epsilon\|}/{\|x\|}$, where $\epsilon$ represents a certain measurement error, and $x$ is the system state. However, such method for Zeno behavior exclusion cannot be extended seamlessly to time-delay systems mainly because the state can be zero in a finite time due to the existence of time delays. Since it is important for the purpose of control implementations to rule out Zeno behavior with the designed event-triggered controller, alternative approaches are crucially needed for time-delay systems.

In the past few years, a handful of results have been reported for event-triggered control of time-delay systems, particularly in the area of network control systems. A commonly used method to exclude Zeno behavior is based on hybrid event-triggering strategies. For instance, a hybrid event-triggering scheme, which considers the switching between periodic state sampling and continuous event-triggering, is proposed in \cite{ZF-CG-HG:2017}. The natural feature of this scheme is the automatic avoidance of Zeno behavior due to state sampling. In \cite{KZ-BG:2019}, a hybrid impulsive and event-triggered control algorithm is constructed for nonlinear time-delay systems. The impulsive controller plays an important role in ruling out Zeno behavior with a prescribed lower bound of the inter-execution times. So far, several event-triggering control schemes have been successfully developed for some particular time-delay systems. Event-based stabilization of nonlinear time-delay systems was studied in~\cite{SD-NM-JFG:2014}, where the considered systems are affine in the control, and the proposed event-triggering rule and the feedback controller both require the full knowledge of the system delays, and such requirement is essential in ruling out Zeno behavior. Event-triggered stabilization of stochastic nonlinear systems with discrete delays and external disturbances was initially investigated in~\cite{QZ:2019} in which the global Lipschitz condition is required on the nonlinear dynamics. There are few other interesting event-triggering schemes introduced for some particular network systems with sufficient conditions provided for the exclusion of Zeno behavior (see, e.g., \cite{YD-JWC-JGX:2018,YC-SW-ZG-TH-SW:2019,JC-BC-ZZ-PJ:2020} for more details). Recently, event-triggering algorithms under sample-and-hold implementations were constructed in\cite{AB-PP-IDL-MDF:2021,AB-PP:2020} for stabilization of general nonlinear systems with time delay. The designed triggering conditions are only examined at a sequence of state-sampling instants so that a minimum inter-event time can be naturally guaranteed. However, the knowledge of the maximum involved delay and a memory of the system states at some historic moments are required for each control update. It can be seen that the study of event-triggering designs with guaranteed Zeno behavior exclusion for general nonlinear time-delay systems has not been extensively conducted, and we set this as our main objective in this research. The main contributions of this paper are described as follows. 


\textit{Statement of Contributions.} Based on a Lyapunov functional result for input-to-state stability of time-delay systems, we propose a novel event-triggered control algorithm for stabilization of general nonlinear systems with time delay. This algorithm invokes the control updates when a certain measurement error reaches a dynamic threshold which is a sum of two functions: one is a function of the system states and another is a function of the evolution time. 
\begin{itemize}
\item Without the time-dependent function, the state-dependent function in our algorithm guarantees the asymptotic stability of the closed-loop systems, but using this alone, Zeno behavior cannot be avoided. 
\item With the time-dependent function, we show that we can exclude Zeno behaviour while also assuring boundedness and attractivity of the trajectories; however, we show that with this strategy, one cannot in general guarantee that the trajectories stay arbitrarily close to the equilibrium by setting the initial data close enough to the equilibrium, i.e., stability may not be maintained. 
\end{itemize}
In the sense described above, our results demonstrate an interesting interplay between establishing stability properties and excluding Zeno behaviour. It is also worthwhile to mention that our algorithm preserves the performance of trajectory boundedness and attractivity which naturally leads to applications in various control problems, e.g., synchronization of dynamical networks, consensus of multi-agent systems, and distributed optimization. Compared with the most recent work on event-triggered control of general nonlinear time-delay systems in \cite{AB-PP-IDL-MDF:2021,AB-PP:2020}, our results are delay-independent, that is, the exact value of the maximum involved delay in the system is not necessary for the computation of control updates, and the proposed event-triggering condition does not require the memory of the delayed states.


The rest of this paper is organized as follows. Some mathematical preliminaries and a Lyapunov functional result for input-to-state stability of time-delay systems are provided in Section \ref{Sec2}. We propose an event-triggered control algorithm with the main results in Section \ref{Sec3}.  Illustrative examples are presented with some numerical simulations in Section \ref{Sec4} to demonstrate the effectiveness of the obtained theoretical results. Section \ref{Sec5} concludes this paper and outlines some future research directions.

\section{Preliminaries}\label{Sec2}
Let $\mathbb{N}$ denote the set of positive integers, $\mathbb{R}$ the set of real numbers, $\mathbb{R}^+$ the set of nonnegative reals, and $\mathbb{R}^n$ the $n$-dimensional real space equipped with the Euclidean norm denoted by $\|\cdot\|$. For $a,b\in \mathbb{R}$ with $b>a$, define
\begin{align*}
\mathcal{PC}([a,b],\mathbb{R}^n)=& \{ \varphi:[a,b]\rightarrow\mathbb{R}^n \mid \varphi \textrm{ is piecewise} \textrm{  right-}\cr
& \textrm{continuous} \}\cr
\mathcal{PC}([a,\infty),\mathbb{R}^n) =&  \{\phi:[a,\infty)\rightarrow\mathbb{R}^n \mid \phi|_{[a,c]}\in \mathcal{PC}([a,c],\mathbb{R}^n) \cr
 & \textrm{ for all } c>a \}
\end{align*}
where $\phi|_{[a,c]}$ is a restriction of $\phi$ on interval $[a,c]$. Let $\mathcal{C}(J,\mathbb{R}^n)$ denote the set of continuous functions mapping interval $J$ to $\mathbb{R}^n$. Given $\tau>0$, the linear space $\mathcal{C}([-\tau,0],\mathbb{R}^n)$ is equipped with a norm defined by $\|\varphi\|_{\tau}:=\sup_{s\in[-\tau,0]}\|\varphi(s)\|$ for $\varphi\in \mathcal{C}([-\tau,0],\mathbb{R}^n)$. For the sake of simplicity, we use $\mathcal{C}_{\tau}$ to represent $\mathcal{C}([-\tau,0],\mathbb{R}^n)$.

Consider the time-delay control system:
\begin{eqnarray}\label{sys}
\left\{\begin{array}{ll}
\dot{x}(t)=f(t,x_t,u) \cr
x_{t_0}=\varphi
\end{array}\right.
\end{eqnarray}
where $x(t)\in\mathbb{R}^n$ denotes the system state at time $t$; $u\in \mathcal{PC}([t_0,\infty),\mathbb{R}^m)$ represents the input; $\varphi\in\mathcal{C}_\tau$ is the initial function; $f:\mathbb{R}^+\times\mathcal{C}_{\tau}\times\mathbb{R}^m\rightarrow\mathbb{R}^n$ satisfies $f(t,0,0)=0$ for all $t\in\mathbb{R}^+$, which implies that system~\eqref{sys} without the input $u$ admits the zero solution (trivial solution); $x_{t}$ is defined as $x_{t}(s):=x(t+s)$ for $s\in[-\tau,0]$, and $\tau>0$ is the maximum involved delay. Given $u\in \mathcal{PC}([t_0,\infty),\mathbb{R}^m)$, define $F(t,\phi):=f(t,\phi,u(t))$ and suppose $F$ satisfies the necessary conditions\footnote{Since $u\in \mathcal{PC}([t_0,\infty),\mathbb{R}^m)$, it is possible for the function $F$ to be discontinuous at the discontinuity points of $u$. Hence, the fundamental theory introduced in \cite{GB-XL:1999} for impulsive time-delay systems applies to system~\eqref{sys} with discontinuous function $F$.} in \cite{GB-XL:1999} so that, for any initial function $\varphi\in\mathcal{C}_{\tau}$, system \eqref{sys} has a unique solution $x(t,t_0,\varphi)$ that exists in a maximal interval $[t_0-\tau,t_0+\Gamma)$, where $0<\Gamma\leq \infty$.

The notion of input-to-state stability (ISS), introduced by Sontag in~\cite{EDS:1989}, has been proved powerful to characterize the effects of external inputs, and recently has shown great application potential in the design of event-triggered controllers (see, e.g. \cite{WZ-ZPJ:2015,PT:2007}). Our event-triggered control algorithm relies heavily on this notion. We recall the following function classes before giving the formal ISS definition for system \eqref{sys}. A continuous function $\alpha:\mathbb{R}^+\rightarrow\mathbb{R}$ is said to be of class $\mathcal{K}$ and we write $\alpha\in\mathcal{K}$, if $\alpha$ is strictly increasing and $\alpha(0)=0$. If $\alpha\in\mathcal{K}$ and also $\alpha(s)\rightarrow\infty$ as $s\rightarrow\infty$, we say that $\alpha$ is of class $\mathcal{K}_{\infty}$ and we write $\alpha\in\mathcal{K}_{\infty}$. A continuous function $\beta:\mathbb{R}^+\times\mathbb{R}^+ \rightarrow\mathbb{R}^+$ is said to be of class $\mathcal{KL}$ and we write $\beta\in \mathcal{KL}$, if the function $\beta(\cdot,t)\in\mathcal{K}$ for each fixed $t\in\mathbb{R}^+$, and the function $\beta(s,\cdot)$ is decreasing and $\beta(s,t)\rightarrow 0$ as $t\rightarrow \infty$ for each fixed $s\in \mathbb{R}^+$.

Now we are ready to state the ISS definition for system \eqref{sys}.

\begin{definition}
System \eqref{sys} is said to be {input-to-state {stable}} (ISS) if there exist functions $\beta\in\mathcal{KL}$ and $\gamma\in\mathcal{K}$ such that, for each initial function $\varphi\in\mathcal{C}_{\tau}$ and input function $u\in\mathcal{PC}([t_0,\infty),\mathbb{R}^m)$, the corresponding solution to \eqref{sys} exists globally and satisfies
$$\|x(t)\|\leq \beta\left(\|\varphi\|_{\tau},t-t_0\right)+\gamma\left(\sup_{s\in[t_0,t]}\|u(s)\|\right)~ \mathrm{~for~all~} t\geq t_0.$$
\end{definition}
Next, we present several concepts regarding Lyapunov candidates and review a Lyapunov functional result which will be used for the design of our event-triggering algorithm. A function $V:\mathbb{R}^+\times\mathbb{R}^n\rightarrow \mathbb{R}^+$ is said to be of class $\mathcal{V}_0$ and we write $V\in\mathcal{V}_0$, if, for each $x\in\mathcal{C}(\mathbb{R}^+,\mathbb{R}^n)$, the composite function $t\mapsto V(t,x(t))$ is in $\mathcal{C}(\mathbb{R}^+,\mathbb{R}^+)$. A functional $V:\mathbb{R}^+\times \mathcal{C}_{\tau}\rightarrow\mathbb{R}^+$ is said to be of class $\mathcal{V}^*_0$ and we write $V\in \mathcal{V}^*_0$, if, for each function $x\in\mathcal{C}([-\tau,\infty),\mathbb{R}^n)$, {the composite function $t\mapsto V(t,x_t)$ is continuous in $t$ for all $t\geq 0$}, and $V$ is locally Lipschitz in its second argument (such Lipschitz condition will be introduced formally in Definition \ref{Lipschiz}). Given an input $u\in \mathcal{PC}([t_0,\infty),\mathbb{R}^m)$, we define the upper right-hand derivative of the Lyapunov functional candidate $V(t,x_t)$ with respect to system \eqref{sys}:
\[
\mathrm{D}^+V(t,\phi)=\limsup_{h\rightarrow 0^+}\frac{V\left(t+h,x_{t+h}(t,\phi)\right)-V(t,\phi)}{h}
\]
where $x(t,\phi)$ is a solution to \eqref{sys} satisfying $x_t=\phi$, that is, $x(s):=x(t,\phi)(s)$ is a solution to the initial value problem: 
\[
\left\{\begin{array}{ll}
\dot{x}(s)=f(s,x_s,u(s))\cr
x_{s_0}=\phi
\end{array}\right.
\]
for $s\in[t,t+h)$ and $s_0=t$, where $h$ is a small positive number.

Reference \cite{XL-KZ:2019} studied a more general form of system~\eqref{sys} with impulse effects (state abrupt changes or jumps). Here, we review a special case in which no impulses are considered and the corresponding ISS result is as follows (see also~\cite{PP-ZPJ:2006}).

\begin{theorem}\label{Th.ISS}
Assume that there exist functions $V_1\in \mathcal{V}_0$, $V_2\in \mathcal{V}^*_0$, $\alpha_1, \alpha_2, \alpha_3 \in \mathcal{K}_{\infty}$ and $\chi\in \mathcal{K}$, and constant $\mu>0$ such that, for all $t\in \mathbb{R}^+$ and $\phi\in \mathcal{C}_{\tau}$, 
\begin{itemize}
\item[(i)] $\alpha_1(\|\phi(0)\|)\leq V_1(t,\phi(0))\leq \alpha_2(\|\phi(0)\|)$; 

\item[(ii)] $0\leq V_2(t,\phi)\leq \alpha_3(\|\phi\|_{\tau})$;

\item[(iii)] $V(t,\phi):=V_1(t,\phi(0))+V_2(t,\phi)$ satisfies
\[
\mathrm{D}^+V(t,\phi) \leq -\mu V(t,\phi) +\chi(\|u\|).
\]

\end{itemize}
Then system \eqref{sys} is ISS.
\end{theorem}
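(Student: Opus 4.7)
The plan is to convert condition (iii) into an explicit exponential bound on the composite quantity $W(t) := V(t,x_t) = V_1(t,x(t)) + V_2(t,x_t)$ along any solution, and then translate this bound into a norm estimate on $\|x(t)\|$ using conditions (i) and (ii). The argument is the standard Sontag-type ISS derivation from a dissipation inequality, adapted to the functional setting.

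First I would fix an input $u \in \mathcal{PC}([t_0,\infty),\mathbb{R}^m)$ and initial function $\varphi$, and let $x(t)$ denote the corresponding (maximal) solution to \eqref{sys}. By (iii), $W(t)$ satisfies the scalar Dini-type inequality $\mathrm{D}^+ W(t) \leq -\mu W(t) + \chi(\|u(t)\|)$. A comparison-principle argument (multiplying through by the integrating factor $e^{\mu t}$ and using the continuity of $W$ guaranteed by $V_1 \in \mathcal{V}_0$ and $V_2 \in \mathcal{V}_0^*$) yields the integral bound
\[
W(t) \leq e^{-\mu(t-t_0)} W(t_0) + \int_{t_0}^{t} e^{-\mu(t-s)} \chi(\|u(s)\|)\, \mathrm{d}s
\]
for all $t$ in the interval of existence. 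The initial value is controlled by (i) and (ii): $W(t_0) \leq \alpha_2(\|\varphi(0)\|) + \alpha_3(\|\varphi\|_\tau) \leq \tilde\alpha(\|\varphi\|_\tau)$ with $\tilde\alpha := \alpha_2 + \alpha_3 \in \mathcal{K}_\infty$. Since $\chi$ is nondecreasing, the forcing integral is bounded by $\mu^{-1} \chi\bigl(\sup_{s\in[t_0,t]}\|u(s)\|\bigr)$.

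Next I would extract a norm estimate. From (i) and the nonnegativity of $V_2$ in (ii), $W(t) \geq \alpha_1(\|x(t)\|)$, so
\[
\alpha_1(\|x(t)\|) \leq e^{-\mu(t-t_0)}\, \tilde\alpha(\|\varphi\|_\tau) + \tfrac{1}{\mu}\, \chi\!\Bigl(\sup_{s\in[t_0,t]}\|u(s)\|\Bigr).
\]
The standard max-split trick (if $a+b \geq c$ then $a \geq c/2$ or $b \geq c/2$), followed by applying the $\mathcal{K}_\infty$ function $\alpha_1^{-1}$ and using subadditivity-by-summing, produces
\[
\|x(t)\| \leq \alpha_1^{-1}\!\bigl(2 e^{-\mu(t-t_0)}\tilde\alpha(\|\varphi\|_\tau)\bigr) + \alpha_1^{-1}\!\bigl(\tfrac{2}{\mu}\chi(\textstyle\sup_{s\in[t_0,t]}\|u(s)\|)\bigr).
\]
Defining $\beta(r,s) := \alpha_1^{-1}(2 e^{-\mu s}\tilde\alpha(r)) \in \mathcal{KL}$ and $\gamma(r) := \alpha_1^{-1}(\tfrac{2}{\mu}\chi(r)) \in \mathcal{K}$ yields the ISS estimate of the definition.

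Finally, since this a priori estimate bounds $\|x(t)\|$ by data that does not depend on the length of the existence interval, a standard continuation argument rules out finite-time escape and gives $\Gamma = \infty$, so the solution is global, completing the proof. The main subtlety I expect is the comparison step for $\mathrm{D}^+ W$: one must verify that $W$ is continuous (provided by $V_1 \in \mathcal{V}_0$ and $V_2 \in \mathcal{V}_0^*$) and that the Dini derivative along solutions of \eqref{sys} dominates the classical derivative of the comparison ODE in the appropriate sense; everything else reduces to routine manipulations with class-$\mathcal{K}_\infty$ functions.
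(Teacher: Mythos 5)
Your proof is correct. Note that the paper does not actually prove Theorem~\ref{Th.ISS} itself --- it imports the result as a special (impulse-free) case of \cite{XL-KZ:2019} --- but your argument is the standard one underlying that reference: the integrating-factor/comparison step for $\mathrm{D}^+V \leq -\mu V + \chi(\|u\|)$, the sandwich $\alpha_1(\|\phi(0)\|)\leq V(t,\phi)\leq (\alpha_2+\alpha_3)(\|\phi\|_\tau)$, the max-split to build $\beta\in\mathcal{KL}$ and $\gamma\in\mathcal{K}$, and the continuation argument for global existence are all in order, and the same exponential-bound technique reappears verbatim in the paper's own proof of Theorem~\ref{Th}.
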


It can be seen from the proof of Theorem \ref{Th.ISS} in \cite{XL-KZ:2019} that the global asymptotic stability (GAS) of system \eqref{sys} is guaranteed when $u=0$. Let $\alpha_4:=\alpha_2+\alpha_3$, and note that $\alpha_4\in\mathcal{K}_{\infty}$ and conditions~(i) and~(ii) of Theorem~\ref{Th.ISS} imply
\[
\alpha_1(\|\phi(0)\|)\leq V(t,\phi)\leq \alpha_4(\|\phi\|_{\tau}),
\] 
which is a standard condition on the Lyapunov candidate when using the method of Lyapunov functionals to justify the stability of a time-delay system (see, e.g., \cite{XL-KZ:2019,PP-ZPJ:2006}). The decomposition of $V$ into a function portion $V_1$ and a functional portion $V_2$ was introduced for impulsive systems with time delay (see, e.g.,  \cite{XL-KZ:2019}). The function $V_1$ is used to analyze the impulse effects on the entire Lyapunov candidate $V$, whereas the functional $V_2$ is indifferent to impulses. Although we will not consider the impulse effects in system~\eqref{sys}, such decomposition with condition (iii) of Theorem \ref{Th.ISS} will play an essential role in the event-triggered controller design, as demonstrated in the next section.

\section{Event-Triggered Control Algorithm}\label{Sec3}
Consider feedback control system \eqref{sys} with a sampled-data implementation:
\begin{eqnarray}\label{ETC.sys}
\left\{\begin{array}{ll}
\dot{x}(t)=f(t,x_t,u(t)) \cr
u(t)=k(x(t_i)),{~t\in[t_{i},t_{i+1})}\cr
x_{t_0}=\varphi
\end{array}\right.
\end{eqnarray}
where $u:\mathbb{R}^+\rightarrow \mathbb{R}^m$ is a control input and $k: \mathbb{R}^n\rightarrow \mathbb{R}^m$ is the feedback control law and satisfies $k(0)=0$. Therefore, system~\eqref{ETC.sys} admits the trivial solution. The time sequence $\{t_i\}_{i\in\mathbb{N}}$ is to be determined according to certain execution rule defined later based on the state measurement, and each time instant $t_i$ corresponds to a control update $u(t_i)$. To be more specific, the controller receives the states, calculates the control law, and updates the control signals all at time $t=t_i$ while staying unchanged between consecutive control updates.

Define the measurement error of state by
\begin{equation}\label{error}
\epsilon(t)=x(t_i)-x(t)~ \textrm{ for } t\in [t_{i},t_{i+1}),
\end{equation}
and note that $\epsilon(t_{i})=0$ for all $i\in\mathbb{N}$ and $\epsilon\in \mathcal{PC}([t_0,\infty),\mathbb{R}^n)$. The feedback control $u$ can then be written as follows
\begin{equation}\label{ETCer}
u(t)=k(x(t_i))=k(\epsilon(t)+x(t))~ \textrm{ for } t\in [t_i,t_{i+1}).
\end{equation}
We thus derive the following closed-loop system by substituting \eqref{ETCer} into system \eqref{ETC.sys}: 
\begin{eqnarray}\label{CL.sys}
\left\{\begin{array}{ll}
\dot{x}(t)=f(t,x_t,k(\epsilon+x)) \cr
x_{t_0}=\varphi.
\end{array}\right.
\end{eqnarray}
Throughout this paper, we make the following assumption on control system~\eqref{CL.sys}.
\begin{assumption}\label{A1}
There exist functions $V_1\in \mathcal{V}_0$, $V_2\in \mathcal{V}^*_0$, $\alpha_1, \alpha_2, \alpha_3 \in \mathcal{K}_{\infty}$ and $\chi\in \mathcal{K}$, and constant $\mu>0$ such that all the conditions of Theorem \ref{Th.ISS} hold for system \eqref{CL.sys} with input $u$ replaced by the measurement error $\epsilon$. 
\end{assumption}
With the above assumption, the closed-loop system \eqref{CL.sys} is ISS with respect to the measurement error $\epsilon$, and GAS when $\epsilon=0$. Boundedness of the state follows from Assumption~\ref{A1}, and then the global existence of the unique solution to system~\eqref{ETC.sys} is guaranteed (see \cite{GB-XL:1999}).

For the completion purpose, we recall the following two definitions related to system~\eqref{CL.sys} (or equivalently, system~\eqref{ETC.sys}). 
\begin{definition}[Global Uniform Boundedness] System~\eqref{CL.sys} is said to be globally uniformly bounded (GUB), if for any $\delta>0$ there exists a constant $M:=M(\delta)>0$ such that $\|\varphi\|_{\tau}\leq \delta \Rightarrow \|x(t)\|\leq M$ for all $t\geq t_0$, where $x(t):=x(t,t_0,\varphi)$ is the solution of system~\eqref{CL.sys}.
\end{definition}

\begin{definition}[Global Attractivity] The {trivial solution} of system~\eqref{CL.sys} is said to be globally attractive (GA), if $\lim_{t\rightarrow\infty} \|x(t)\|=0$ for any $\varphi\in\mathcal{C}_{\tau}$, where $x(t):=x(t,t_0,\varphi)$ is {the solution} of~\eqref{CL.sys}.
\end{definition}

In this paper, we propose an execution rule to determine the time sequence $\{t_i\}_{i\in\mathbb{N}}$ for the updates of the feedback control input $u$ so that the closed-loop system \eqref{CL.sys} with the measurement error $\epsilon$ still preserves certain desired performance (such as GAS, boundedness, and attractivity). To do so, we restrict $\epsilon$ to satisfy
\begin{equation}\label{exrule}
\chi(\|\epsilon\|)\leq \sigma \alpha_1(\|x\|) + \chi\left(a e^{-b(t-t_0)}\right) 
\end{equation}
for some constants $\sigma\geq 0$, $a\geq 0$, and $b>0$; the updating of the control input $u$ is then triggered by the execution rule (or event)
\begin{equation}\label{event}
\chi(\|\epsilon\|) = \sigma \alpha_1(\|x\|) + \chi\left(a e^{-b(t-t_0)}\right).
\end{equation}
The event times are the moments when the event occurs, i.e., 
\begin{equation}\label{et.time}
t_{i+1}=\inf\left\{t\geq t_i \mid \chi(\|\epsilon\|) = \sigma \alpha_1(\|x\|)+ \chi\left(a e^{-b(t-t_0)}\right)\right\}.
\end{equation}
According to the feedback control law in \eqref{ETC.sys}, the control input is updated at each $t_i$ (the error $\epsilon$ is set to zero simultaneously), remains constant until the next event time $t_{i+1}$, and then the error $\epsilon$ is reset to zero again. The event times in \eqref{et.time} are defined implicitly, and then it is necessary to rule out the existence of Zeno behavior, which is formalized in the following definition. 
\begin{definition}[Zeno Behavior\cite{CN-EG-JC:2019}]\label{Zeno}
For the closed-loop system~\eqref{CL.sys} with event times determined by~\eqref{et.time}, a solution with initial condition $x_{t_0}=\varphi$ is said to exhibit Zeno behavior, if there exists $T>0$ such that $t_i\leq T$ for all $i\in \mathbb{N}$. If Zeno behavior does not occur along any solution of system~\eqref{CL.sys}, then we say system~\eqref{CL.sys} does not exhibit Zeno behavior.
\end{definition}

The objective for the rest of this paper is to establish sufficient conditions to guarantee the desired performance of system~\eqref{CL.sys} with event times determined by~\eqref{et.time} and also exclude Zeno behavior. In general, there are two main ways that Zeno behavior can be ruled out: the first one is to ensure the existence of a uniform lower bound of any inter-execution time, and the second one is based on the contradiction argument to show the exclusion of Zeno behavior by a direct use of Definition~\ref{Zeno}. We will derive our results by using these two methods. It is worth mentioning that ensuring a lower bound of the inter-execution times is stronger than ruling out Zeno behavior. For example, suppose system~\eqref{CL.sys} has a solution with event times $t_k=\sum^k_{i=1}1/i$ for $k\in\mathbb{N}$. It can be seen that $t_k\rightarrow\infty$ as $k\rightarrow\infty$ which means such a solution does not exhibit Zeno behavior. However, a lower bound of the inter-execution times is not ensured, since $t_{k+1}-t_k\rightarrow 0$ as $k\rightarrow\infty$. We refer the reader to~\cite{CN-EG-JC:2019} for a detailed discussion on Definition~\ref{Zeno}.

In this study, we mainly require locally Lipschitz conditions on the functions $\alpha_1^{-1}$ (the inverse function of $\alpha_1$), $\chi$, $k$, and $f$, which are described as follows.
\begin{definition}[Locally Lipschitz]\label{Lipschiz} In this paper, local Lipschitz comes in two different flavors.
\begin{itemize}
\item The function $f:\mathbb{R}^n\rightarrow\mathbb{R}^m$ is called locally Lipschitz, if for each $z\in \mathbb{R}^n$ there exist constants $L>0$ and $R>0$ such that $\|f(y)-f(z)\| \leq L \|y-z\|$ on the open ball of center $z$ and radius $R$:
\[
\mathcal{B}_{R}(z):= \{y\in\mathbb{R}^n \mid \|y-z\| < R  \}.
\]

\item The function $f:\mathcal{C}_{\tau}\rightarrow\mathbb{R}^n$ is called locally Lipschitz, if for each $\phi\in \mathcal{C}_{\tau}$ there exist positive constants $L$ and $R$ such that $\|f(\varphi)-f(\phi)\| \leq L \|\varphi-\phi\|_{\tau}$ on the open ball of center $\phi$ and radius $R$:
\[
\mathcal{B}^{\tau}_{R}(\phi):=\{\varphi\in\mathcal{C}_{\tau} \mid \|\varphi-\phi\|_{\tau} < R  \}.
\]
\end{itemize}
\end{definition}

Now we are in the position to introduce our first result.

\begin{theorem}\label{Th}
Suppose that Assumption \ref{A1} holds with $V_1\in \mathcal{V}_0$, $V_2\in \mathcal{V}^*_0$, $\alpha_1, \alpha_2, \alpha_3 \in \mathcal{K}_{\infty}$ and $\chi\in \mathcal{K}$, and constant $\mu>0$. The event times $\{t_i\}_{i\in\mathbb{N}}$ are defined by~\eqref{et.time} with $0\leq \sigma<\mu$, $a\geq 0$, and $b>0$. We further assume that
\begin{itemize}
\item $\alpha^{-1}_1$, $\chi$, and $k$ are locally Lipschitz;

\item $f(t,\phi,u)$ is locally Lipschitz in $\phi$ for all $(t,u)\in \mathbb{R}^+\times \mathbb{R}^m$;

\item $f(t,\phi,u)$ is locally Lipschitz in $u$ for all $(t,\phi)\in \mathbb{R}^+\times \mathcal{C}_{\tau}$
\end{itemize}
then,
\begin{itemize}
\item[1)] if $a=0$ and $\sigma>0$, the closed-loop system~\eqref{CL.sys} is GAS;

\item[2)] if $a>0$, the closed-loop system~\eqref{CL.sys} is GUB and its trivial solution is GA. Moreover, if $b<\mu-\sigma$, then the inter-execution times $\{t_{i+1}-t_i\}_{i\in\mathbb{N}}$ are lower bounded by a quantity $T^*>0$, that is, $t_{i+1}-t_i \geq T^*$ for any $i\in\mathbb{N}$.
\end{itemize}
\end{theorem}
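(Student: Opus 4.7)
The plan is to convert the event-triggering inequality~\eqref{exrule} and Assumption~\ref{A1} into a scalar differential inequality on $V(t, x_t) := V_1(t, x(t)) + V_2(t, x_t)$, then study the growth of $\epsilon$ between events using the local Lipschitz hypotheses. Substituting~\eqref{exrule} into condition (iii) of Theorem~\ref{Th.ISS} and using $\alpha_1(\|x(t)\|) \leq V_1(t, x(t)) \leq V(t, x_t)$, I would obtain on every inter-event interval $[t_i, t_{i+1})$,
\[
\mathrm{D}^+ V(t, x_t) \leq -(\mu - \sigma)\,V(t, x_t) + \chi\!\bigl(a e^{-b(t - t_0)}\bigr),
\]
and by continuity of $V(\cdot, x_\cdot)$ this extends to all of $[t_0, \infty)$.

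When $a = 0$ (part 1), Gronwall gives $V(t) \leq V(t_0) e^{-(\mu - \sigma)(t - t_0)}$, and inverting $\alpha_1$ together with $V(t_0) \leq \alpha_4(\|\varphi\|_\tau)$ produces a $\mathcal{KL}$-estimate of $\|x(t)\|$, yielding GAS. When $a > 0$ (part 2), the comparison principle gives
\[
V(t) \leq V(t_0) e^{-(\mu - \sigma)(t - t_0)} + \int_{t_0}^t e^{-(\mu - \sigma)(t - s)} \chi\!\bigl(a e^{-b(s - t_0)}\bigr)\,ds.
\]
Bounding the integrand crudely by $\chi(a)$ gives boundedness of $V$, hence GUB. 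Splitting the integral into a transient piece (small because of the exponential prefactor $e^{-(\mu - \sigma)(t - s)}$ for $s$ small) and a tail piece (small because $\chi(ae^{-b(s - t_0)}) \to 0$) yields $V(t) \to 0$, hence GA.

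The hard step is the uniform lower bound $T^* > 0$ under $b < \mu - \sigma$. My plan is first to sharpen the decay of $V$: using local Lipschitz of $\chi$ at $0$ to replace $\chi(ae^{-b(s - t_0)})$ by $L_\chi\, a e^{-b(s - t_0)}$ for $s$ past some threshold $T_1$, the integral evaluates explicitly to give $V(t) \leq C_1 e^{-(\mu - \sigma)(t - t_0)} + C_2 e^{-b(t - t_0)}$; since $b < \mu - \sigma$ the slower rate $b$ dominates. Local Lipschitz of $\alpha_1^{-1}$ at $0$ then transfers this to $\|x(t)\| \leq \widetilde{C}\, e^{-b(t - t_0)}$ for $t \geq T_2$. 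Using $f(t, 0, 0) = 0$, $k(0) = 0$, and the separate local Lipschitz properties of $f$ in $\phi$ and in $u$, together with local Lipschitz of $k$, on any $[t_i, t_{i+1})$ with $t_i \geq T_2 + \tau$ I would bound
\[
\|\dot{\epsilon}(t)\| \leq L_f^{\phi}\, \|x_t\|_\tau + L_f^{u} L_k \, \|x(t_i)\| \leq M_1\, e^{-b(t_i - t_0)},
\]
using that $e^{-b(t - t_0)} \leq e^{-b(t_i - t_0)}$ for $t \geq t_i$ and that the whole history $x_t$ lies in the decay regime once $t - \tau \geq T_2$. Integrating yields $\|\epsilon(t)\| \leq M_1 e^{-b(t_i - t_0)}(t - t_i)$. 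The triggering condition at $t_{i+1}$ forces $\chi(\|\epsilon(t_{i+1})\|) \geq \chi(ae^{-b(t_{i+1} - t_0)})$, hence $\|\epsilon(t_{i+1})\| \geq a e^{-b(t_{i+1} - t_0)}$ by monotonicity of $\chi$, giving $M_1 (t_{i+1} - t_i) \geq a e^{-b(t_{i+1} - t_i)}$; the unique positive root of $M_1 s = a e^{-b s}$ is a lower bound for the late inter-event times. For the finitely many $i$ with $t_i < T_2 + \tau$, GUB gives a coarser uniform bound $\|\dot{\epsilon}\| \leq M_2$ and the same computation yields $t_{i+1} - t_i \geq (a/M_2)\, e^{-b(T_2 + \tau - t_0)} > 0$; the required $T^*$ is the minimum of the two bounds.

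The main obstacle is the careful propagation of the sharp $e^{-b(t - t_0)}$ rate through $\alpha_1^{-1}$ and then through $\|x_t\|_\tau$ (which forces the $\tau$-shift and requires waiting until the whole delay window lies in the local Lipschitz regime near the origin), while keeping all Lipschitz constants uniform on the bounded invariant region supplied by GUB. Threading the sharp decay into $\|\dot{\epsilon}\|$ without circularly involving $\|\epsilon\|$ itself is the most delicate piece of the argument.
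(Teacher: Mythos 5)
Your proposal is correct and follows essentially the same route as the paper: substitute the triggering inequality into the ISS--Lyapunov estimate to get $\mathrm{D}^+V \leq -(\mu-\sigma)V + \chi\left(a e^{-b(t-t_0)}\right)$, conclude $V(t,x_t)\leq M e^{-\eta(t-t_0)}$ with $\eta=\min\{b,\mu-\sigma\}$ (whence GAS for $a=0$, and GUB plus GA for $a>0$), and, for $b<\mu-\sigma$, bound the growth of $\epsilon$ on $[t_i,t_{i+1})$ via the Lipschitz hypotheses and the decay of the state to arrive at a normalized inequality of the form $M_1 T \geq a e^{-bT}$, whose unique positive root furnishes the uniform $T^*$ --- this is exactly the paper's inequality~\eqref{keyineq} and its function $g$. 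Two minor remarks. First, your split into early ($t_i< T_2+\tau$) and late regimes is unnecessary here: since $\alpha_1^{-1}$ is locally Lipschitz everywhere, it is Lipschitz on the compact range $[0,M]$ supplied by GUB, so the sharp estimate $\|x(t)\|\leq L_1 M e^{-b(t-t_0)}$ holds from $t_0$ onward and a single inequality covers all $i$, which is what the paper does; your two-regime device is, however, essentially what is needed in Theorem~\ref{Th2}, where $\alpha_1^{-1}$ fails to be Lipschitz at zero. Second, the explicit early-regime bound $(a/M_2)e^{-b(T_2+\tau-t_0)}$ you state does not follow from the inequality $M_2 T \geq a' e^{-bT}$ with $a':=a e^{-b(T_2+\tau-t_0)}$, because the unique positive root of $M_2 T = a' e^{-bT}$ is strictly smaller than $a'/M_2$; the correct positive lower bound is that root, obtained by the same monotonicity argument you already use for the late events, so this is a slip in the constant rather than a gap in the method.
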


\begin{proof} For the sake of notational convenience, let $v_1(t):=V_1(t,x)$, $v_2(t):=V_2(t,x_t)$, and $v(t):=V(t,x_t)=v_1(t)+v_2(t)$. Execution rule~\eqref{event} guarantees that
\begin{align*}
\chi(\|\epsilon\|) &\leq \sigma \alpha_1(\|x\|) + \chi\left(a e^{-b(t-t_0)}\right)\cr
            &\leq \sigma \alpha_1(\|x\|) + aL e^{-b(t-t_0)}
\end{align*}
for all $t\geq t_0$, where $L>0$ denotes the Lipschitz constant of $\chi$ on the interval $[0,a]$. We then can derive from condition (ii) of Theorem~\ref{Th.ISS} that
\begin{align}\label{dini}
\mathrm{D}^+v(t) &\leq -\mu v(t) +\chi\left(\|\epsilon\|\right) \cr
                      &\leq -\mu v(t) +\sigma \alpha_1(\|x\|) + aL e^{-b(t-t_0)}\cr
                      &\leq -c v(t) + aL e^{-b(t-t_0)}
\end{align}
where $c:=\mu-\sigma>0$. We derived the third inequality of~\eqref{dini} from condition (i) of Theorem~\ref{Th.ISS} and the fact that $v_1(t)\leq v(t)$ for all $t\geq t_0$. Multiply both sides of~\eqref{dini} by $e^{c(t-t_0)}$ and then we have 
\begin{equation}\label{dini1}
\mathrm{D}^+ [e^{c(t-t_0)}v(t) ]\leq aL e^{(c-b)(t-t_0)}.
\end{equation}
Integrating both sides of~\eqref{dini1} from $t_0$ to $t> t_0$ yields
\begin{equation}
e^{c(t-t_0)} v(t)-v(t_0) \leq aL \int^t_{t_0} e^{(c-b)(s-t_0)} \mathrm{d} s .
\end{equation}
If $b>c$, we have
\begin{align}
v(t) &\leq v(t_0)e^{-c(t-t_0)} + \frac{aL}{b-c}  (e^{-c(t-t_0)}-e^{-b(t-t_0)} )\cr
     &\leq e^{-c(t-t_0)} \left(v(t_0) +\frac{aL}{b-c}  (1-e^{-(b-c)(t-t_0)} ) \right)\cr
     &\leq e^{-c(t-t_0)} \left(v(t_0) +\frac{aL}{b-c}  \right).
\end{align}
If $b<c$, we then get
\begin{align}
v(t) &\leq e^{-b(t-t_0)} \left(v(t_0)e^{-(c-b)(t-t_0)} +\frac{aL}{c-b}  (1-e^{-(c-b)(t-t_0)} ) \right)\cr
     &\leq e^{-b(t-t_0)} \left(v(t_0) +\frac{aL}{c-b}  \right).
\end{align}
If $b=c$, then
\begin{align}\label{b=c}
v(t) &\leq v(t_0)e^{-c(t-t_0)} + aL(t-t_0)e^{-c(t-t_0)}\cr
     &\leq v(t_0)e^{-c(t-t_0)} + \frac{aL}{c-\xi}e^{-\xi(t-t_0)}\cr
     &\leq e^{-\xi(t-t_0)} \left( v(t_0) + \frac{aL}{c-\xi} \right)     
\end{align}
where constant $\xi$ satisfies $0<\xi<c$. To derive the second inequality of~\eqref{b=c}, we applied the fact that $(c-\xi)(t-t_0)\leq e^{(c-\xi)(t-t_0)}$ for $t\geq t_0$. Define
\begin{align}\label{bound}
\eta:=\left\{\begin{array}{ll}
\min\{b,c\},~&\textrm{if}~ b\not=c \cr
\xi,~&\textrm{if}~ b=c
\end{array}\right.
\end{align}
and
\[
\bar{M}:=\left\{\begin{array}{ll}
\frac{aL}{|c-b|},~&\textrm{if}~ b\not=c \cr
\frac{aL}{|c-\xi|},~&\textrm{if}~ b=c
\end{array}\right.
\]
then
\begin{equation}
v(t)\leq (v(t_0)+\bar{M} ) e^{-\eta(t-t_0)}\leq M e^{-\eta(t-t_0)}
\end{equation}
for all $t\geq t_0$, where $M=\alpha_2(\|\varphi(0)\|)+\alpha_3(\|\varphi\|_{\tau})+\bar{M}.$

From condition (i) of Theorem~\ref{Th.ISS}, we have
\begin{equation}\label{bdd}
\|x(t)\|\leq \alpha^{-1}_1\big(M e^{-\eta(t-t_0)}\big)\leq \alpha^{-1}_1\big(M \big)~~\textrm{for all}~t\geq t_0
\end{equation}
and
\begin{equation}
\lim_{t\rightarrow\infty} \|x(t)\| \leq \lim_{t\rightarrow\infty} \alpha^{-1}_1\big(M e^{-\eta(t-t_0)}\big)=0.
\end{equation}
Therefore, system~\eqref{CL.sys} is GUB, that is, $\|x(t)\|\leq \alpha^{-1}_1(M)$ for all $t\geq t_0$, and the trivial solution is GA. The Lyapunov functional candidate $v(t)$ has an exponential convergence rate $\eta$ defined in~\eqref{bound}. Furthermore, if $a=0$, then $\bar{M}=0$ and $M$ depends only on the initial function $\varphi$. Hence, stability of system~\eqref{CL.sys} follows from~\eqref{bdd}.

In the following, Zeno behavior will be excluded from system~\eqref{CL.sys} when $a>0$ and $b<c$. We will do this through identifying a lower bound $T^*>0$ of the inter-execution times $\{t_{i+1}-t_i\}_{i\in\mathbb{N}}$.

Let $L_1$ be the Lipschitz constant of $\alpha^{-1}_1$ on the interval $[0,M]$. Let $R:=\alpha^{-1}_1(M)$ and $L_2$ be the Lipschitz constant of the function $f(t,\cdot,u):\mathcal{C}_{\tau}\rightarrow \mathbb{R}^n$ on $\mathcal{B}^{\tau}_{R}(0)$ for all $(t,u)\in \mathbb{R}^+\times\mathbb{R}^m$. Since $k$ is locally Lipschitz and $f$ is locally Lipschitz with respect to its third argument, the composite function $f(t,\phi,k(\cdot)):\mathbb{R}^n\rightarrow \mathbb{R}^n$ is also locally Lipschitz and we denote $L_3$ as its Lipschitz constant on $\mathcal{B}_{R}(0)$ .

For $t\in [t_i,t_{i+1})$, we have
\[
\dot{\epsilon}(t)=-\dot{x}(t)=-f(t,x_t,k(x(t_i))).
\]
Integrating both sides of the above equation from $t_i$ to $t$ gives
\[
\epsilon(t)-\epsilon(t_i)=-\int^{t}_{t_i} f(s,x_s,k(x(t_i)))\mathrm{d}s.
\]
Since $\epsilon(t_i)=0$, we have
\begin{align}\label{error.estimation}
         &  \|\epsilon(t)\| =    \left\|\int^{t}_{t_i} f(s,x_s,k(x(t_i)))\mathrm{d}s\right\|\cr
         &  =   \int^{t}_{t_i} \left\|f(s,x_s,k(x(t_i)))-f(s,0,k(x(t_i)))+f(s,0,k(x(t_i)))\right\|\mathrm{d}s \cr
         &\leq \int^{t}_{t_i}  L_2 \|x_s\|_{\tau} +L_3 \|x(t_i)\|   \mathrm{d}s \cr
         &\leq \int^{t}_{t_i} \left[ L_2 \alpha^{-1}_1\left(M e^{-\eta(s-\tau-t_0)}\right) +L_3 \alpha^{-1}_1\left(M e^{-\eta(t_i-t_0)}\right)   \right] \mathrm{d}s \cr
         &\leq L_1 L_2 M e^{\eta\tau} \int^{t}_{t_i} e^{-\eta(s-t_0)} \mathrm{d}s + L_1L_3 M (t-t_i)e^{-\eta(t_i-t_0)}\cr
         &=    \lambda_1 (e^{-\eta(t_i-t_0)} - e^{-\eta(t-t_0)})  + \lambda_2 (t-t_i)e^{-\eta(t_i-t_0)}
\end{align}
where $\lambda_1={L_1L_2Me^{\eta\tau}}/{\eta}>0$ and $\lambda_2=L_1L_3 M>0$.

According to our execution rule~\eqref{event}, we have at $t=t_{i+1}$:
\begin{align}
& \sigma \alpha_1(\|x(t_{i+1})\|) + \chi\left(a e^{-b(t_{i+1}-t_0)}\right)\cr
 &=    \chi\left(\|\epsilon(t^-_{i+1})\|\right)\cr
&\leq \chi\Big(\lambda_1 (e^{-\eta(t_i-t_0)} - e^{-\eta(t_{i+1}-t_0)})  + \lambda_2 (t_{i+1}-t_i)e^{-\eta(t_i-t_0)}\Big)
\end{align}
which implies
\[
a e^{-b(t_{i+1}-t_0)} \leq \lambda_1 (e^{-b(t_i-t_0)} - e^{-b(t_{i+1}-t_0)} )  + \lambda_2 (t_{i+1}-t_i)e^{-b(t_i-t_0)}
\]
where $\epsilon(t^-_{i+1})$ represents the left-hand limit of $\epsilon$ at $t=t_{i+1}$, and we replaced $\eta$ with $b$ since $b<c$. Denote the inter-execution time $T_{i+1}=t_{i+1}-t_i$ and multiply both sides of the above inequality with $e^{b(t_i-t_0)}$, then we get
\begin{equation}\label{keyineq}
a e^{-b T_{i+1}} \leq \lambda_1(1-e^{-b T_{i+1}}) +\lambda_2 T_{i+1}.
\end{equation}
Define a function
\[
g(T)=a e^{-b T} - \lambda_1(1-e^{-b T}) -\lambda_2 T~ \textrm{ for } T\geq 0,
\]
then we have $g(0)=a>0$ and $g(T)\rightarrow-\infty$ as $T\rightarrow\infty$. Moreover, $g'(T)=-b(a+\lambda_1)e^{-bT} - \lambda_2<0$ means function $g$ is strictly decreasing on $[0,\infty)$, and there exists a unique $T^*>0$ so that $g(T^*)=0$. Therefore, we can derive from~\eqref{keyineq} that $T_{i+1}\geq T^*$ for all $i\in\mathbb{N}$.
We then can conclude that system~\eqref{CL.sys} does not exhibit Zeno behavior.
\end{proof}

\begin{remark} 
Zeno behavior has been excluded by specifying a uniform minimum time between any two successive events. Such a lower bound can be obtained by solving the equation $g(T)=0$, which has a unique solution (See Example~\ref{example01} in Section~\ref{Sec4} for a demonstration). It can be seen that inequality~\eqref{keyineq} plays a significant role in ensuring the lower bound, and this type of inequalities has been used to exclude Zeno behavior in linear control systems (see, e.g., \cite{WZ-ZPJ:2015}). In this study, we have shown that inequality~\eqref{keyineq} can also be applied to rule out Zeno behavior in nonlinear time-delay control systems with event-triggering condition~\eqref{event} and carefully tuned parameters $\sigma$ and $b$.  
\end{remark}

The advantage of our execution rule is that any positive parameter $a$ will enable to exclude Zeno behavior from the closed-loop system~\eqref{CL.sys}. The drawback of parameter $a$ being positive is that we can show the boundedness and attractivity of the trajectories but the stability of the closed-loop system~\eqref{CL.sys} cannot be derived. The bound $\alpha^{-1}_1(M)$ of the states is not only dependent on the initial data $\varphi$ but is also closely related to the tunable design parameters $a$ and $b$ in~\eqref{et.time}. For example, setting $a$ small and $b$ large decreases the bound at the cost of more events being triggered. On the other hand, setting $a$ large and $b$ small reduces the number of control updates at the cost of increasing the bound. The exponential convergence rate of the Lyapunov candidate depends on parameter $b$, and small $b$ corresponds to small rate $\eta$. Moreover, the lower bound $T^*$ of the inter-execution times also depends on these parameters. Increasing $a$ or reducing $b$ increases this bound and potentially decreases the number of event times over a finite time period (see Table~\ref{table} for a demonstration). 

It can be seen from the above proof that parameter $a$ is essential in ruling out Zeno behavior from system~\eqref{CL.sys}. If $a=0$ with $\sigma>0$, our execution rule reduces to the following one:
\begin{equation}\label{event0}
\chi(\|\epsilon\|) = \sigma \alpha_1(\|x\|)
\end{equation}
which is identical to the one proposed in \cite{KZ-BG:2019}. Although the asymptotic stability of system~\eqref{CL.sys} can be guaranteed from~\eqref{bdd} as discussed in the proof of Theorem \ref{Th}, Zeno behavior cannot be excluded for some linear scalar time-delay systems (see \cite{KZ-BG:2019} for details and Fig. \ref{ETC01} for a demonstration). We can also see from the proof of Theorem \ref{Th} that, if $b<c$, the convergence rate of the Lyapunov functional $V(t,x_t)$ is $\eta=b$. Compared to system~\eqref{CL.sys} associated with the execution rule~\eqref{event0}, the convergence rate is smaller and less control updates are triggered, which implies that it might be possible to exclude Zeno behavior from system~\eqref{CL.sys} and Theorem~\ref{Th} confirms this possibility. If $\sigma=0$ with $a>0$, then we derive from~\eqref{event} the following execution rule
\begin{equation}\label{event1}
\|\epsilon(t)\|=a e^{-b(t-t_0)}.
\end{equation}
We can conclude from Theorem \ref{Th} that system~\eqref{CL.sys} with the event times determined according to the execution rule~\eqref{event1} does not exhibit Zeno behavior provided $0<b<\mu$. Compared with the execution rule~\eqref{event} which depends on both the states and the evolution time, the execution rule~\eqref{event1} is only time-dependent and thus simpler to implement, but more control updates are most likely triggered (see Table \ref{table2} for numerical comparisons).

Quadratic forms as Lyapunov functions have been widely used in stability analysis of control systems, e.g., $V_1(t,x)=x^TPx$ where $P$ is a positive definite $n\times n$ matrix. For this type of Lyapunov functions, condition (i) of Theorem \ref{Th.ISS} holds with $\alpha_1(s)=\gamma s^2$ where $\gamma>0$ is some constant (e.g., $\gamma$ is the smallest eigenvalue of $P$ if $V_1(t,x)=x^TPx$), and then $\alpha^{-1}_1(s)=\sqrt{s/\gamma}$, which is locally Lipschitz on the open interval $(0,\infty)$ but not at $s=0$. Therefore, inequality~\eqref{error.estimation} is not valid, and the technique of ruling out Zeno behavior in Theorem~\ref{Th} is not applicable to such class $\mathcal{K}_{\infty}$ functions. In the next theorem, we will show the lack of Zeno behavior from system~\eqref{CL.sys} by a direct use of Definition~\ref{Zeno}, if $\alpha^{-1}_1$ is not locally Lipschitz at zero. Nevertheless, the lower bound may not be guaranteed.

\begin{theorem}\label{Th2}
Suppose all the conditions of Theorem \ref{Th} hold with the Lipschitz condition of $\alpha^{-1}_1$ replaced with the following one:
\begin{itemize}
\item $\alpha^{-1}_1$ is locally Lipschitz on the open interval $(0,\infty)$ but not at zero.
\end{itemize}
Then,
\begin{itemize}
\item[1)] if $a=0$ and $\sigma>0$, the closed-loop system~\eqref{CL.sys} is GAS;

\item[2)] if $a>0$, the closed-loop system~\eqref{CL.sys} is GUB and its trivial solution is GA. Furthermore, if $b<\mu-\sigma$, system~\eqref{CL.sys} does not exhibit Zeno behavior.
\end{itemize}
\end{theorem}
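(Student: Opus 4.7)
The plan is to recycle as much of the proof of Theorem~\ref{Th} as possible. The derivation of $v(t)\leq M e^{-\eta(t-t_0)}$ in inequalities~\eqref{dini}--\eqref{bound} never invokes the Lipschitz continuity of $\alpha_1^{-1}$, only condition~(i) of Theorem~\ref{Th.ISS}; consequently the bound $\|x(t)\|\leq \alpha_1^{-1}(M e^{-\eta(t-t_0)})$ still holds and delivers GUB, GA, and (when $a=0$, $\sigma>0$) GAS exactly as in parts~1) and 2) of Theorem~\ref{Th}. What breaks is the quantitative estimate in~\eqref{error.estimation}, which used a Lipschitz constant $L_1$ of $\alpha_1^{-1}$ on $[0,M]$; since $\alpha_1^{-1}$ may now have unbounded slope near $0$, no such $L_1$ exists and the clean lower bound $T^*$ on inter-execution times can no longer be extracted. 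The remaining task is therefore to prove, for $a>0$ and $b<\mu-\sigma$, that Zeno behavior is still impossible, even though a uniform minimum dwell time may be lost.

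I would proceed by contradiction using Definition~\ref{Zeno}. Suppose some solution exhibits Zeno behavior, so there is $T>0$ with $t_i\leq T$ for all $i\in\mathbb{N}$. Then the increasing bounded sequence $\{t_i\}$ converges to some $t^*\in(t_0,T]$ and $t_{i+1}-t_i\to 0$. From GUB, $\|x(t)\|\leq R:=\alpha_1^{-1}(M)$ for all $t\geq t_0$, so also $\|x_s\|_{\tau}\leq R$ for $s\geq t_0$. Local Lipschitz continuity of $k$ at $0$ together with $k(0)=0$ confines $\{k(x(t_i))\}_i$ to a compact ball $B\subset\mathbb{R}^m$; local Lipschitz continuity of $f$ in its second argument at $0\in\mathcal{C}_{\tau}$ and in its third argument at each point of $B$, combined with $f(t,0,0)=0$, then yields a constant $K>0$ such that $\|f(s,x_s,k(x(t_i)))\|\leq K$ for every $s\in[t_0,t^*]$ and every $i$. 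Integrating $\dot{\epsilon}(t)=-f(t,x_t,k(x(t_i)))$ from $t_i$ to $t_{i+1}$ with $\epsilon(t_i)=0$ then gives
\[
\|\epsilon(t_{i+1}^-)\|\leq K(t_{i+1}-t_i),
\]
which tends to $0$ as $i\to\infty$.

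Passing to the limit in the triggering equality
\[
\chi(\|\epsilon(t_{i+1}^-)\|)=\sigma\alpha_1(\|x(t_{i+1})\|)+\chi\!\left(ae^{-b(t_{i+1}-t_0)}\right),
\]
the left-hand side tends to $\chi(0)=0$ by continuity of $\chi$, while the right-hand side is bounded below by $\chi(ae^{-b(t^*-t_0)})>0$ because $a,b>0$, $t^*$ is finite, and $\chi\in\mathcal{K}$. The resulting contradiction rules out Zeno behavior. The main obstacle I anticipate is the careful packaging of the uniform constant $K$: one must combine the GUB bound on $x_s$ with the three separate local Lipschitz hypotheses on $k$, on $f$ in $\phi$, and on $f$ in $u$, exploiting $k(0)=0$ and $f(t,0,0)=0$ to turn local Lipschitz inequalities into a genuine uniform bound along the whole compact orbit. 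Once $K$ is secured, the positivity of the exponential threshold $\chi(ae^{-b(t-t_0)})$ at $t^*$ makes the contradiction immediate, and this is precisely the mechanism by which the time-dependent term in~\eqref{event} excludes Zeno behavior even when the quantitative minimum-dwell-time argument of Theorem~\ref{Th} is unavailable.
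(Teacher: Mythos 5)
Your proposal is correct, and for the Zeno-exclusion part it takes a genuinely different route from the paper. The paper stays inside the machinery of Theorem~\ref{Th}: assuming $t_i \to t^* < \infty$, it notes that on $[t_0,t^*]$ the Lyapunov bound keeps $v(t)$ in $[Me^{-\eta(t^*-t_0)},M]$, which is bounded away from zero, so $\alpha_1^{-1}$ admits a Lipschitz constant $\bar L_1$ there; it then reruns the estimate \eqref{error.estimation} with $\bar L_1$ in place of $L_1$, recovers an inequality of the form \eqref{keyineq}, and obtains a local minimum dwell time $\bar T>0$ for all events in $[t_0,t^*]$, contradicting $t_{i+1}-t_i\to 0$. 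That is precisely why the paper still needs $b<\mu-\sigma$ (to replace $\eta$ by $b$ in \eqref{keyineq}) and still needs $\alpha_1^{-1}$ locally Lipschitz on $(0,\infty)$. Your argument bypasses both: you bound the vector field by a constant $K$ along the bounded orbit, deduce $\|\epsilon(t_{i+1}^-)\|\le K(t_{i+1}-t_i)\to 0$, and contradict the strictly positive floor $\chi\bigl(ae^{-b(t^*-t_0)}\bigr)$ of the trigger threshold. This is more elementary and in fact proves a slightly stronger statement --- Zeno behavior is excluded for every $a>0$ and $b>0$, with no Lipschitz hypothesis on $\alpha_1^{-1}$ away from zero --- at the price of discarding the quantitative dwell-time information that the paper's route retains. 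One caveat: your step ``local Lipschitz continuity of $k$ at $0$ confines $k(x(t_i))$ to a compact ball'' (and the analogous packaging of $f$ into the constant $K$) really rests on boundedness of a locally Lipschitz map over a bounded set; in $\mathbb{R}^n$ this follows from compactness of the closed ball, while in $\mathcal{C}_{\tau}$ one should invoke compactness of the orbit $\{x_s : s\in[t_0,t^*]\}$ rather than of the ball $\mathcal{B}^{\tau}_{R}(0)$. The paper's own proofs make the identical leap when fixing $L_2$ and $L_3$, so you are at the same level of rigor, but this is the one place where the uniformity of $K$ deserves an explicit sentence.
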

\begin{proof} The GAS, GUB, and GA properties of system~\eqref{CL.sys} follow directly from the proof of Theorem~\ref{Th}. We only need to show that system~\eqref{CL.sys} does not exhibit Zeno behavior when $a>0$ and $b<\mu-\sigma$. We will do this by a contradiction argument. Assume there exists an initial condition so that the corresponding solution of system~\eqref{CL.sys} exhibits Zeno behavior, that is, the sequence of event times $\{t_i\}_{i\in\mathbb{N}}$ is upper bounded by a finite number. Hence, the sequence $\{t_i\}_{i\in \mathbb{N}}$ is convergent, and we denote $t^*:=\lim_{i\rightarrow\infty} t_i$ where $t^*> t_0$. Since $\alpha^{-1}_1$ is locally Lipschitz on $(0,\infty)$, let $\bar{L}_1$ represent the Lipschitz constant of $\alpha^{-1}_1$ on the closed interval $[M e^{-\eta(t^*-t_0)}, M]$ with $M$ and $\eta$ defined in the proof of Theorem \ref{Th}. For any small $\varepsilon>0$, there exists a $t_i$ such that $t^*-\varepsilon\leq t_i< t^*$, and we then can obtain from~\eqref{error.estimation} that
\[
\|\epsilon(t)\| \leq   \bar{\lambda}_1 (e^{-\eta(t_i-t_0)} - e^{-\eta(t-t_0)})  + \bar{\lambda}_2 (t-t_i)e^{-\eta(t_i-t_0)}
\]
for all $t\in [t_i,t_{i+1})$, where $\bar{\lambda}_1={\bar{L}_1L_2Me^{\eta\tau}}/{\eta}>0$ and $\bar{\lambda}_2=\bar{L}_1L_3 M>0$ with $L_2$ and $L_3$ given in the proof of Theorem \ref{Th}. Similarly to the discussion of~\eqref{keyineq}, we can derive a lower bound of $t_{i+1}-t_i$, that is, there exists a $\bar{T}>0$ so that $t_{i+1}-t_i\geq \bar{T}$. Since $\varepsilon>0$ can be chosen arbitrarily small, we let $\varepsilon<\bar{T}$ which then implies $t_{i+1}\geq t_i+\bar{T}\geq t^* -\varepsilon +\bar{T}> t^*.$ This gives a contradiction to the definition of $t^*$. Therefore, system~\eqref{CL.sys} does not exhibit Zeno behavior.
\end{proof}


\section{Illustrative Examples}\label{Sec4}


\begin{example}\label{example01} Consider a linear scalar control system with time delay
\begin{eqnarray}\label{linear.sys}
\left\{\begin{array}{ll}
\dot{x}(t)=B x(t-r)+u(t) \cr
x_{t_0}=\varphi
\end{array}\right.
\end{eqnarray}
where state $x\in \mathbb{R}$, initial function $\varphi(s)=1$ for $s\in [-r,0]$, time delay $r=16$, $B=-0.1$, and $u(t)=kx(t)$ is the state feedback control with control gain $k=-0.2$. It has been shown in \cite{KZ-BG:2019} that system~\eqref{linear.sys} with $u=0$ is unstable, and we can conclude from Theorem \ref{Th.ISS} that the feedback control system~\eqref{linear.sys} is GAS.
\end{example}

Consider the event-triggered implementation of $u$ in system \eqref{linear.sys}, and then the closed-loop system can be written in the form of \eqref{CL.sys}:
\begin{eqnarray}\label{linear.CLsys}
\left\{\begin{array}{ll}
\dot{x}(t)=B x(t-r)+kx(t)+k\epsilon(t) \cr
x_{t_0}=\varphi
\end{array}\right.
\end{eqnarray}
where $\epsilon(t)=x(t_i)-x(t)$ for $t\in [t_i,t_{i+1})$ with $i\in \mathbb{N}$, and the sequence of event times $\{t_i\}_{i\in\mathbb{N}}$ is to be determined by \eqref{et.time}.

To derive the functions $\alpha_1$ and $\chi$ in \eqref{et.time} and illustrate the effectiveness of Theorem~\ref{Th}, we consider Lyapunov functional $V=V_1+V_2$ with 
\[
V_1(\phi(0))=p|\phi(0)| \textrm{ and } V_2(\phi)=\int^0_{-r} |\phi(s)| \left(\frac{-s}{r}q_1 + \frac{s+r}{r}q_2 \right) \mathrm{d}s,
\]
where constants $p$, $q_1$, and $q_2$ are positive and satisfy $-k>q_2>q_1=p|B|$. Then condition (i) of Theorem \ref{Th.ISS} is satisfied with $\alpha_1(|\phi(0)|)=\alpha_2(|\phi(0)|)=p|\phi(0)|$, while condition (ii) holds with 
$\alpha_3(\|\phi\|_r)= r q_2~\|\phi\|_r$. It can be seen that $\alpha^{-1}_1(z)=z/p$ is globally Lipschitz on its domain.

From the dynamics of system \eqref{linear.CLsys}, we have
\begin{align*}
\mathrm{D}^+{V}_1(x(t))  =  \mathrm{D}^+ |x(t)| &\leq p |B| |x(t-r)|+ p k \mathrm{sgn}(x)x+ p |k \epsilon|\cr
          &  =  |B| V_1(x(t-r))+ k V_1(x(t))+ p |k \epsilon|
\end{align*}
where $\mathrm{sgn}(x)$ represents the sign function of the real number $x$, and
\begin{align*}
\mathrm{D}^+{V}_2(x_t) &=  \frac{\mathrm{d}}{\mathrm{d}t} \left(\int^t_{t-r} |x(s)| \left(\frac{t-s}{r}q_1 + \frac{s-t+r}{r}q_2 \right) \mathrm{d}s  \right)\cr
 & = -\frac{q_2-q_1}{r} \int^t_{t-r} |x(s)| \mathrm{d}s  +q_2 |x(t)| - q_1|x(t-r)|\cr
 &\leq -\frac{q_2-q_1}{rq_2} V_2(x_t)  + \frac{q_2}{p} V_1(x(t)) - \frac{q_1}{p}V_1(x(t-r)).
\end{align*}
The above two inequalities imply
\begin{align*}
\mathrm{D}^+{V} (x_t) \leq (k+q_2)V_1(x(t)) - \frac{q_2-q_1}{rq_2} V_2(x_t) +p |k \epsilon|.
\end{align*}
Then condition (iii) of Theorem~\ref{Th.ISS} is satisfied with
\[
\mu=\min\left\{-(k+q_2), \frac{q_2-q_1}{rq_2}\right\}>0 \textrm{ and } \chi(|\epsilon|)= p|k\epsilon|
\]
and the execution rule~\eqref{event} can be written as follows
\begin{equation}\label{exrule00}
|\epsilon|=\frac{\sigma}{|k|} |x|+a e^{-b(t-t_0)}.
\end{equation}
Thus, all the conditions of Theorem~\ref{Th} are satisfied. From the definitions of $\mu$ and $q_1$, we can see that  choosing small enough $p$ and $q_2$ while keeping $p/q_2$ close to $0$ can make $\mu$ arbitrarily close to $\min\{-k,1/r\}=0.0625$. In the following simulations, we select small $p$ and $q_2$ so that $p/q_2=0.08$, and then $\mu=0.062$.

To illustrate our results with numerical simulations, we choose $\sigma=0.03$ so that $\mu>\sigma$. Figure~\ref{ETC01} shows the trajectory of system~\eqref{linear.CLsys} with $a=0$ in~\eqref{exrule00} and numerically verifies the existence of Zeno behavior (theoretical proof of the existence can be found in \cite{KZ-BG:2019}). Let $a=0.3$ and $b=0.03$, then $\mu-\sigma>b$ and Fig.~\ref{ETC02} shows the trajectories of system~\eqref{linear.CLsys} with execution rule~\eqref{exrule00}. For system~\eqref{linear.CLsys} with the above Lyapunov functional, all the functions mentioned in Theorem~\ref{Th} are linear, and the corresponding Lipschitz constants can be derived easily. It can be obtained from~\eqref{keyineq} with~\eqref{exrule00} that the lower bound of the inter-execution times $\{t_{i+1}-t_i\}_{i\in\mathbb{N}}$ is $3.59\times 10^{-3}$ which is the unique solution of the equation $g(T)=0$ given in the proof of Theorem~\ref{Th}. This equation can be readily solved by Matlab or Maple.

Table~\ref{table} indicates the number of events triggered by the execution rule~\eqref{exrule00} over the time interval $[0,10^3]$ for different combinations of the parameters $a$ and $b$. It can be seen that less events will be triggered for bigger $a$ and/or smaller $b$, which verifies our discussion in Section~\ref{Sec3}. Table~\ref{table2} shows the number of event times determined by \eqref{exrule00} on the interval $[0,600]$ for different values of $a$ with $b=0.05$. We can observe that more control updates are triggered according to the time-dependent execution rule~\eqref{event1} (the execution rule~\eqref{exrule00} with $\sigma=0$). 

\begin{table}[!t] 
\caption{Number of the Event Times Determined by~\eqref{exrule00} with $\sigma=0.03$ on the Time Interval $[0,10^3]$.}
\label{table}
\centering
\scalebox{1.}{
\begin{tabular}{ |c|c|c| }
\hline
$a$ & $b$ & Number of event times \\ 
\hline
\multirow{2}{1.5em}{$0.1$} 
& 0.01 & 117 \\ \cline{2-3}
& 0.03 & 174 \\ \cline{2-3}
\hline
\multirow{2}{1.5em}{$0.3$} 
& 0.01 & 89 \\ \cline{2-3}
& 0.03 & 163 \\ \cline{2-3}
\hline 
\end{tabular} }
\end{table}

\begin{table}[!t]
\caption{Number of the Event Times Determined by~\eqref{exrule00} on the Time Interval $[0,600]$.}
\label{table2}
\centering
\scalebox{1.}{
\begin{tabular}{ |c|c|c|c|c| } 
\hline
$a~~(b=0.05)$ & 0.1 & 0.4 & 1 & 2 \\ \cline{1-5}
Number of event times with $\sigma=0.01$ & 1160 & 699 & 518 & 492 \\ \cline{1-5}
Number of event times with $\sigma=0$ & 1892 & 854 & 604 & 548 \\ \cline{1-5}
\hline
\end{tabular} }
\end{table}

\begin{figure}[!t]
\centering
\subfigure[$a=0$]{\label{ETC01}\includegraphics[width=2.4in]{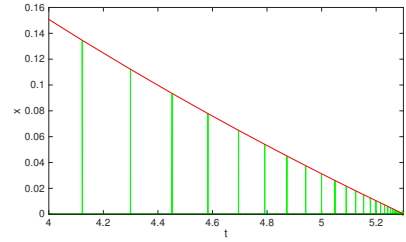}}
\subfigure[$a=0.3$ and $b=0.03$]{\label{ETC02}\includegraphics[width=2.4in]{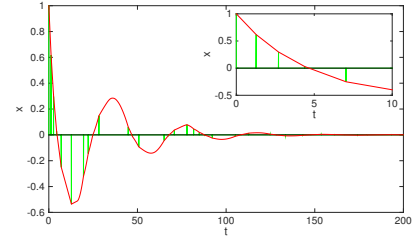}}
\caption{Trajectories of the closed-loop system \eqref{linear.CLsys} with event times determined by \eqref{exrule00}. }
\label{fig1}
\end{figure}

\begin{example}\label{example02} Consider the following nonlinear time-delay control system
\begin{eqnarray}\label{nonlinear.sys}
\dot{x}(t)=A x(t)+g(x(t-d))+Bu(t)
\end{eqnarray}
where state $x\in \mathbb{R}^n$, time delay $d>0$, matrices $A\in\mathbb{R}^{n\times n}$ and $B\in\mathbb{R}^{n\times m}$, and $u(t)=Kx(t)$ is the state feedback control with control gain $K\in \mathbb{R}^{m\times n}$. For any $x,y\in\mathbb{R}^n$, nonlinear function $g\in\mathcal{PC}(\mathbb{R}^n,\mathbb{R}^n)$ satisfies $\|g(x)-g(y)\|\leq L\|x-y\|$ with $L>0$.
\end{example}

With the event-triggering implementation and measurement error $\epsilon(t)=x(t_i)-x(t)$ for $t\in[t_i,t_{i+1})$, we can rewrite system \eqref{nonlinear.sys} as follows
\begin{eqnarray}\label{nonlinear.CLsys}
\dot{x}(t)=(A+BK) x(t)+g(x(t-d))+BK\epsilon(t)
\end{eqnarray}
where the event times $t_i$ ($i\in\mathbb{N}$) are to be determined by event-triggering condition \eqref{event}. To do so, let us apply Theorem \ref{Th2} with the Lyapunov functional candidate $V=V_1+V_2$ where 
\begin{align*}
&V_1(\phi(0))=\phi^T(0)\phi(0),\cr
&V_2(\phi)=\int^0_{-d} \phi^T(s)\left( \frac{d+s}{d}q+1 \right) \phi(s) \mathrm{d}s.
\end{align*}
Here, $q$ is a positive constant. Then, both conditions (i) and (ii) of Theorem \ref{Th} hold and $\alpha_1(\|\phi\|)=\|\phi\|^2$ which is not locally Lipschitz at zero. It follows from the dynamics of system~\eqref{nonlinear.CLsys} that
\begin{align*}
\textrm{D}^+ V_1 (x(t))\leq& \left( \Lambda +\varepsilon_1+\varepsilon_2 \right)V_1(x(t))\cr
                         &+\varepsilon^{-1}_1 L^2 V_1(x(t-d)) + \varepsilon^{-1}_2 \epsilon^T(t)(BK)^TBK\epsilon^T(t)
\end{align*}
and
\begin{align*}
\textrm{D}^+ V_2 (x_t)\leq& (q+1)V_1(x(t)) - V_1(x(t-d)) - \frac{q}{d}\int^t_{t-d} x^T(s)x(s)\mathrm{d}s
\end{align*}
where $\Lambda$ denotes the largest eigenvalue of the matrix $(A+BK)^T(A+BK)$. In the first inequality, we applied Young's inequality twice with $\varepsilon_1=L^2$ and $\varepsilon_2>0$. We get from these two inequalities that condition (iii) of Theorem~\ref{Th} holds with $\chi(\|\epsilon\|) = \varepsilon^{-1}_2 \|BK\|^2 \|\epsilon\|^2$ and 
\begin{align*}
\mu=\min\left\{  -\left(\Lambda +L^2+1+q+\varepsilon_2\right), ~\frac{q}{d(q+1)} \right\}.
\end{align*}
Then, the sequence of event times defined by~\eqref{et.time} is as follows
\begin{align}\label{et.timeL}
t_{i+1}=\inf\left\{t>t_i \mid \|\epsilon\|^2= \frac{\sigma\varepsilon_2}{\|BK\|^2} \|x\|^2 + a^2 e^{-2b(t-t_0)}    \right\}.
\end{align}
Therefore, we can conclude from Theorem~\ref{Th2} that if $\mu>0$ and we select positive parameters $\sigma$ and $b$ in~\eqref{et.timeL} so that $b<\mu-\sigma$, then closed-loop system~\eqref{nonlinear.CLsys} with event times determined by~\eqref{et.timeL} is free of Zeno behavior and GUB, and its trivial solution is GA. To ensure a positive $\mu$, it is sufficient to require 
$$\Lambda +L^2+1<0,$$ 
and then positive constants $\varepsilon_2$ and $q$ can be chosen close to zero so that $\Lambda +L^2+1 +q + \varepsilon_2<0$.

\section{Conclusions}\label{Sec5}
We have investigated event-triggered control for nonlinear time-delay systems. A novel execution rule has been proposed with tunable parameters. Analysis of many different combinations of these parameters has been conducted. Numerical simulations have been provided to verify our theoretical results. We can see from the proof of Theorem~\ref{Th} that if $b\geq c$, the exponential convergence rate of the Lyapunov functional candidate is $\eta=c$ which is the same as the convergence rate for system~\eqref{CL.sys} with execution rule~\eqref{event0}. Since system~\eqref{CL.sys} with execution rule~\eqref{event0} can exhibit Zeno behavior, intuitively we expect that it is still possible for system~\eqref{CL.sys} with execution rule~\eqref{event} to exhibit Zeno behavior when $b\geq c$. Discussion in this direction will be explored further which is a topic for future research. In this study, we considered time delays in the nonlinear systems but assumed that the feedback controller receives the system states and updates the control signals simultaneously. Therefore, another future research topic is to consider time delays in the implementation of the proposed event-triggered control algorithm and also in the feedback controllers.


\begin{thebibliography}{1}
\bibitem{PT:2007}
P. Tabuada, Event-triggered real-time scheduling of stabilizing control tasks, \emph{IEEE Transactions on Automatic Control}, vol. 52, no. 2, pp. 1680-1685, 2007.

\bibitem{ZPJ-TFL:2015}
Z.-P. Jiang and T.-F. Liu, A survey of recent results in quantized and event-based nonlinear control, \emph{International Journal of Automation and Computing}, vol 12, no. 5, pp. 455-466, 2015.

\bibitem{CN-EG-JC:2019}
C. Nowzari, E. Garcia, and J. Cort\'es, Event-triggered communication and control of networked systems for multi-agent consensus, \emph{Automatica}, vol. 105, pp. 1-27, 2019.

\bibitem{QL-ZW-XH-DHZ:2014}
Q. Liu, Z. Wang, X. He, and D.H. Zhou, A survey of event-based strategies on control and estimation, \textit{Systems Science \& Control Engineering}, vol. 2, no. 1, pp. 90-97, 2014.

\bibitem{MK:2009}
M. Krstic, Delay Compensation for Nonlinear, Adaptive, and PDE Systems, \emph{Birkh\" auser}, Boston, 2009. 

\bibitem{EF:2014}
E. Fridman, Introduction to Time-Delay Systems: Analysis and Control, \emph{Birkh\" auser}, Basel, 2014.

\bibitem{JKH:1977} 
J.K. Hale, Theory of Functional Differential Equations, \emph{Springer-Verlag}, New York, 1977.




\bibitem{WZ-ZPJ:2015}
W. Zhang and Z.-P. Jiang, Event-based leader-following consensus of multi-agent systems with input time delay, \emph{IEEE Transactions on Automatic Control}, vol. 60, no. 5, pp. 1362-1367, 2015.

\bibitem{EG-YC-DWC:2017}
E. Garcia, Y. Cao, and D. W. Casbeer, Periodic event-triggered synchronization of linear multi-agent systems with communication delays, \textit{IEEE Transactions on Automatic Control}, vol. 62, no. 1, pp. 366-371, 2017.

\bibitem{AS-EF:2016}
A. Selivanov and E. Fridman, Predictor-based networked control under uncertain transmission delays, \textit{Automatica}, vol. 80, pp. 101-108, 2016.

\bibitem{EN-PT-JC:2020}
E. Nozari, P. Tallapragada, and J. Cort\'es, Event-triggered stabilization of nonlinear systems with time-varying sensing and actuation delay, \textit{Automatica}, vol. 113, 108754, 2020.

\bibitem{ZF-CG-HG:2017}
Z. Fei, C. Guan, and H. Gao, Exponential synchronization of networked chaotic delayed neural network by a hybrid event trigger scheme, \emph{IEEE transactions on neural networks and learning systems}, vol. 29, no. 6, pp. 2558-2567, 2017.

\bibitem{KZ-BG:2019}
K. Zhang and B. Gharesifard, Excluding Zeno behaviour in event-triggered time-delay systems by impulsive controls, \emph{arXiv preprint}, arXiv:1912.02396, 2019.

\bibitem{SD-NM-JFG:2014}
S. Durand, N. Marchand, and J.F. Guerrero-Castellanos, Event-based stabilization of nonlinear time-delay systems. In \textit{Proceedings of the 19th IFAC world congress}, Cape Town, South Africa, pp. 6953-6958, 2014.

\bibitem{QZ:2019}
Q. Zhu, Stabilization of stochastic nonlinear delay systems with exogenous disturbances and the event-triggered feedback control, \textit{IEEE Transactions on Automatic Control}, vol. 64, no. 9, pp. 3764-3771, 2019.

\bibitem{YD-JWC-JGX:2018}
Y. Dong, J.-W. Chen, and J.-G. Xian, Event-triggered control for finite-time lag synchronisation of time-delayed complex networks, \emph{IET Control Theory \& Applications}, vol. 12, no. 14, pp. 1916-1923, 2018.

\bibitem{YC-SW-ZG-TH-SW:2019}
Y. Cao, S. Wang, Z. Guo, T. Huang, and S. Wen, Synchronization of memristive neural networks with leakage delay and parameters mismatch via event-triggered control, \emph{Neural Networks}, vol. 119, pp. 178-189, 2019.

\bibitem{JC-BC-ZZ-PJ:2020}
J. Chen, B. Chen, Z. Zeng, and P. Jiang, Event-triggered synchronization strategy for multiple neural networks with time delay, \emph{IEEE Transactions on Cybernetics}, vol. 50, no. 7, pp. 3271-3280, 2020.


\bibitem{AB-PP-IDL-MDF:2021}
A. Borri, P. Pepe, I. Di Loreto, and M. Di Ferdinando, Finite-dimensional periodic event-triggered control of nonlinear time-delay systems with an application to the artificial pancreas, \textit{IEEE Control Systems Letters}, vol. 5, no. 1, pp. 31-36, 2021.

\bibitem{AB-PP:2020}
A. Borri and P. Pepe, Event-triggered control of nonlinear systems with time-varying state delays, \textit{IEEE Transactions on Automatic Control}, DOI 10.1109/TAC.2020.3009173, 2020.


\bibitem{GB-XL:1999}
 G. Ballinger and X. Liu, Existence and uniqueness results for impulsive delay differential equations, \emph{Dynamics of  Continuous, Discrete \& Impulsive Systems}, vol. 5, pp. 579-591, 1999.

\bibitem{EDS:1989}
E.D. Sontag, Smooth stabilization implies coprime factorization, \emph{IEEE Transactions on Automatic Control}, vol. 34, no. 4, pp. 435-443, 1989.

\bibitem{XL-KZ:2019}
X. Liu and K. Zhang, Input-to-state stability of time-delay systems with delay-dependent impulses, \emph{IEEE Transactions on Automatic Control}, vol. 65, no. 4, pp. 1676-1682, 2020.

\bibitem{PP-ZPJ:2006}
P. Pepe and Z.-P. Jiang, A Lyapunov-Krasovskii methodology for ISS and iISS of time-delay systems, \emph{Systems \& Control Letters}, vol. 55, pp. 1006-1014, 2006.

\end{thebibliography}
\end{document}